\newtheorem{defn}{Definition}
\newtheorem{lemma}{Lemma}
\newtheorem{pro}{Proposition}
\newtheorem{rk}{Remark}
\numberwithin{equation}{section} \setcounter{tocdepth}{1}
\newcommand{\bea}{\begin{eqnarray}}
\newcommand{\eea}{\end{eqnarray}}
\def\d{\delta}
\def\a{\alpha}
\def\b{\beta}
\def \d {\delta}
\def\c{\gamma}
\begin{document}
\title[On free energies  of Ising model]{On free energies  of the Ising  model on the Cayley tree}

\author{D. Gandolfo, M.M. Rakhmatullaev,  U. A. Rozikov, J. Ruiz}

 \address{D.\ Gandolfo and J.Ruiz\\Centre de Physique Th\'eorique, UMR 6207,Universit\'es Aix-Marseille
 et Sud Toulon-Var, Luminy Case 907, 13288 Marseille, France.}
\email {gandolfo@cpt.univ-mrs.fr\ \ ruiz@cpt.univ-mrs.fr}

\address{M. \ M. \ Rakhmatullaev\\ Namangan State University,
 Namangan, Uzbekistan.}
 \email{mrahmatullaev@rambler.ru}

 \address{U.\ A.\ Rozikov\\ Institute of mathematics,
29, Do'rmon Yo'li str., 100125, Tashkent, Uzbekistan.}
\email {rozikovu@yandex.ru}

\begin{abstract}
We present, for the Ising model on the Cayley tree, some explicit formulae of the free energies (and entropies) according to boundary conditions (b.c.). They include translation-invariant, periodic, Dobrushin-like b.c., as well as those corresponding to (recently discovered) weakly periodic Gibbs states.  The later are defined through a partition of the tree that induces a 4-edge-coloring. We compute the density of each color.
\end{abstract}
\maketitle

{\bf Mathematics Subject Classifications (2010).} 82B26 (primary);
60K35 (secondary)

{\bf{Key words.}} Cayley tree, Ising model, boundary condition,
Gibbs measure, free energy, entropy.

\section{Introduction and definitions}

On non-amenable graphs, not only Gibbs measures but also the free energy (and the entropy) depend on the boundary conditions.

The purpose of this paper is to study  this dependence
for  one of the simplest such graph,  the Cayley tree
(Bethe lattice). Our analysis is restricted to the Ising model.

Let $\Gamma^k= (V , L)$ be the uniform Cayley tree, where each vertex has $k + 1$ neighbors with $V$ being the set of vertices and $L$ the set of edges.

On this tree, there is a natural distance to be denoted $d(x,y)$,
 being
 the number of nearest neighbor pairs  of the minimal path between  the vertices $x$ and $y$
 (by  path one means   a collection of  nearest neighbor pairs, two consecutive pairs
 sharing at least a given vertex).

 The Ising model is  defined by the
 formal Hamiltonian
\begin{equation}
\label{h}
H(\sigma)=-J\sum_{\langle x,y\rangle\subset V}
\sigma(x)\sigma(y),
\end{equation}
where
the sum runs over   nearest neighbor vertices
$\langle x,y\rangle$ and the spins $\sigma(x)$ take values in the set  $\Phi=\{ +1,- 1\}$.

For a fixed $x^0\in V$, the root,
let
\begin{equation*}
W_n=\{x\in V :\, d(x,x^0)=n\}, \ \
V_n=\{x\in V :\, d(x,x^0)\leq n\}
\end{equation*}
be respectively the sphere and the ball
of radius $n$ with center at $x^0$,  and for $ x\in W_n$ let
 $$S(x)=\{y\in W_{n+1} :  d(y,x)=1\},$$ be
 the set of direct successors of $x$.

 The (finite-dimensional) Gibbs distributions at
 inverse temperature   $\beta=1/T$
 are defined by
 \begin{equation}\label{*}
\mu_n(\sigma_n)=Z^{-1}_n
\exp\Big\{\beta J
\sum_{\langle x,y\rangle\subset V_n} \sigma(x)\sigma(y)
+\sum_{x\in W_n}h_x\sigma(x)\Big\},
\end{equation}
with  partition functions given by
 \begin{equation}\label{pf}
Z_n
\equiv
Z_n(\b, h)
=\sum_{\sigma_n \in \Phi^{V_n}}
\exp\Big\{\beta J
\sum_{\langle x,y\rangle\subset V_n} \sigma(x)\sigma(y)
+\sum_{x\in W_n}h_x\sigma(x)\Big\}.
\end{equation}
Here
$$h=\{h_x\in R, x\in V\}$$
is a collection of real numbers that stands for (generalized) boundary condition.

The probability distributions (\ref{*}) are said compatible if for all
$\sigma_{n-1}$
\begin{equation}\label{**}
\sum_{\omega_n\in \Phi^{W_n}}\mu_n(\sigma_{n-1}, \omega_n)=\mu_{n-1}(\sigma_{n-1}).
\end{equation}
It is well known that this compatibility condition is satisfied if and only if for any $x\in V$
the following equation holds
\begin{equation}\label{***}
 h_x=\sum_{y\in S(x)}f(h_y,\theta),
\end{equation}
where
\begin{equation}
\label{****}
\theta=\tanh(\b J), \quad f(h,\theta)={\rm arctanh}(\theta\tanh h).
\end{equation}

Namely,   for any boundary condition
satisfying the functional equation (\ref{***}) there exists a unique Gibbs measure, the correspondence being one-to-one.

We will be interested in the dependence with respect to boundary conditions of the free energy defined as the limit
\begin{equation}\label{fe1}
F(\b, h)=-\lim_{n\to \infty}{1\over \b |V_n|}
\ln Z_n(\b, h),
\end{equation}
where $|\cdot|$ denotes hereafter the cardinality of a set.

A boundary condition satisfying (\ref{***}) will be in the sequel called \emph{compatible}.

The paper is organized as follows.  Section 2 provides the first part of results:  a general formula applied then to various known  boundary conditions
(translation-invariant, Bleher-Ganikhodjaev, Zachary, ART),
and those about entropy.
 Periodic and weakly periodic  cases are the subject of
 Section 3.
A first appendix concerns the density of colors mentioned in the abstract.
A second appendix
provides a sufficient condition for the existence of the free energy  in case of compatible boundary conditions.

\section{General formula  and first results}

For compatible  boundary conditions,
the free energy is given by the
formula
\begin{equation}\label{fe2}
F(\b, h)=-\lim_{n\to\infty}{1\over |V_n|}
\sum_{x\in V_{n}}a(x),
\end{equation}
where
$$
a(x)
={1\over 2\beta}\ln[4\cosh(h_x-\b J)\cosh(h_x+\b J)].
$$

To see it,  first notice  that
\begin{equation}\label{re}
Z_n(\beta, h)=A_{n-1}Z_{n-1}(\beta, h),
\end{equation}
where $A_n=\prod\limits_{x\in W_n}b(x)$ with  $b(x)$ satisfying
\begin{equation}\label{b}
\prod_{y\in S(x)}\sum_{u=\pm 1}\exp(\b J\varepsilon u+uh_y)=b(x)\exp(\varepsilon h_x), \ \ \varepsilon=\pm 1.
\end{equation}
This formula used with both values of $\varepsilon$ implies
$$b(x)
=
\prod_{y\in S(x)}\left(4\cosh(h_y-\b J)\cosh(h_y+\b J)\right)^{1/2}
=\exp\Big(\beta\sum_{y\in S(x)}a(y)\Big).$$
It is then enough to insert this formula into the recursive equation (\ref{re}) to  get by iteration
$$Z_n(\b,h)=\prod_{x\in V_{n-1}}b(x) $$
which gives (\ref{fe2}).

Notice  that
\begin{equation}\label{fef}
F(\b, h)=F(\b, -h),
\end{equation}
 where $-h=\{-h_x, x\in V\}$.

\subsection{Translation-invariant boundary conditions}
They correspond to constant functions, $h_x=h$, in which case
 the condition (\ref{***}) reads
\begin{equation}\label{f}
h=kf(h,\theta).
\end{equation}
 The equation (\ref{f}) has a unique solution $h=0$, if $\theta\leq \theta_{\rm c}={1\over k}$ and three distinct solutions $h=0,\pm h_*$ ($h_*>0$), when $\theta >\theta_{\rm c}$.

Let us denote by $\mu_0$, $\mu_{\pm}$ the corresponding Gibbs measures and recall the following known results for the ferromagnetic Ising model ($\theta \geq 0$):

 \begin{itemize}
    \item[(1)] If $ \theta\leq \theta_{c}$,  $\mu_0$ is unique and extreme.

    \item[(2)] If $\theta > \theta_{c}$,   $\mu_-$ and $  \mu_+$, are extreme.

    \item[(3)]  $\mu_0$ is
extreme if and only if $ \theta < {1\over\sqrt{k}}$.
    \end{itemize}

 (see e.g. \cite{Pr}, \cite{Ge}, \cite{BRZ})

According to formula (\ref{fe2}), the free energies of translation-invariant (TI)  b.c. are given by:
\begin{eqnarray}
F_{\rm TI}(\b, 0)
&=&
-{1\over\beta}\ln(2\cosh(\b J)).
\label{zero}
\\
F_{\rm TI}(\b, h_*)
&=&
F_{\rm TI}(\b,-h_*)
=
-{1\over 2\beta}\ln[4\cosh(\b J-h_*)\cosh(\b J+h_*)].
\label{tife}
\end{eqnarray}
Some particular plots are shown in Fig. \ref{fig1}.

\begin{figure}
  \includegraphics[width=11cm]{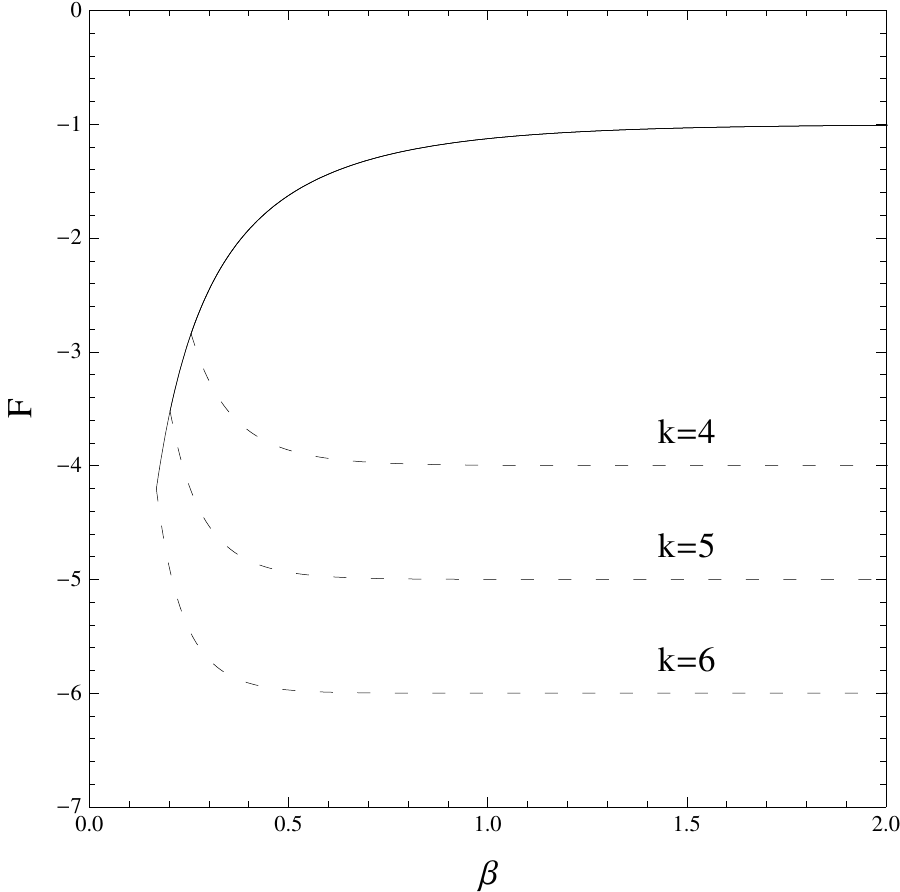}\\
  \caption{
The free energies $F_{\rm TI}(\b, 0)$ (solid line) and $F_{\rm TI}(\b, h_*)$ (dotted lines)
for  $\theta \geq \theta_{c}$  with  $J=1$ and $k=4, 5, 6$.
 }
\label{fig1}
\end{figure}

In order to draw the  free energy
$F_{TI} (\beta, h)$ as a
function of $\beta$, we  notice that
the equation (\ref{f}) gives
\begin{equation}
\label{betah}
\beta(h) = \frac{1}{2 J} \ln \frac{e^{(1+\frac{1}{k})2 h}-1}{e^{2 h}-e^{\frac{2 h}{k}}}
\end{equation}
and   use the  parametric representation defined by the following mapping:
\begin{equation*}
  h \to (\beta (h),\, F(h)), \hspace{1em} \mbox{for} \ \ h \geq 0.
\end{equation*}
The function  $F(h)$ is defined by inserting (\ref{betah}) into (\ref{zero})  and (\ref{tife}).

\subsection{Bleher-Ganikhodjaev construction}
Here one consider the half tree. Namely the root $x^0$
has $k$ nearest neighbours.
Consider an infinite path $\pi=\{x^0=x_0<x_1<\dots\}$  (the notation $x<y$ meaning that  pathes from the root to $y$ go through $x$).
Associate  to this  path
a collection $h^\pi$ of numbers
 given  by the condition
\begin{equation}\label{b3.1}
h_x^\pi=\left\{\begin{array}{ll}
-h_*, \ \ \mbox{if} \ \ x\prec x_n, \, x\in W_n,\\[2mm]
h_*, \ \ \mbox{if} \ \ x_n\prec x, \, x\in W_n,
\end{array}\right.
\end{equation}
$n=1,2,\dots$ where $x\prec x_n$ (resp. $x_n\prec x$) means that $x$ is on the left (resp. right) from the path $\pi$.

For any infinite path $\pi$, the collection of numbers $h^\pi$ satisfying relations (\ref{***}) exists and is unique (see \cite{BG}).

A real number
$t=t(\pi)$, $0\leq t\leq 1$ can be assigned to the infinite path and the set $h^{\pi(t)}$
is uniquely defined.
The  set of numbers $h^{\pi(t)}$ being distinct for different $t\in [0,1]$, it is also the case for the
 corresponding Gibbs  measures. One thus obtains uncountable many  Gibbs measures and they are extreme.

Since   the solution $h^{\pi}$
can differ from  $h_*$ or $-h_*$ only on the path $\pi$,
it is thus obvious that the values $h_x^{\pi}$, $x\in \pi$ do not contribute to the free energy.
As a consequence, we   get from the evenness
(\ref{tife}), that
the  free energies of  Bleher-Ganikhodjaev and  translation-invariant boundary conditions coincide:
\begin{equation}
F_{\rm BG}(\b, h^{\pi})=F_{\rm TI}(\b, h_*).
\end{equation}

\subsection{Zachary construction.}
This construction provides an
(uncountable) set of distinct functions $h^{(t)}$  satisfying
 condition (\ref{***})
and parameterized by  $t\in (-h_*,h_*)$.
It is assumed here that  $\theta>\theta_{\rm c}$.

Take $t\ne 0$ and  define the sequence
$(t_n)_{n\geq 0}$ recursively by $t_0=t$,
\begin{equation}\label{tn}
t_n=kf(t_{n+1}, \theta), \ \ n\geq 0.
\end{equation}
Since the function $f$ is increasing and maps the interval $(-h_*,h_*)$  into itself, the definition of $t_n$  make sense.
Moreover one can see that $\lim_{n\to\infty}t_n=0$ for each $t_0=t$.

Consider the function $h^{(t)}_x=t_n$ for all $x\in W_n$. This function  satisfies condition (\ref{***}) for any $t$ and
by  construction, distinct $t$  assign distinct  $h^{(t)}$.
The  associated  Gibbs measures  are known to be extreme \cite{Z}.

The corresponding free energies can be written as
$$F_{\rm Zach}(\beta, h^{(t)})=-{1\over 2\beta}\lim_{n\to \infty}{1\over |V_n| }
\sum_{m=0}^{n}|W_m| \
 \tilde{a }(t_m),$$
where
$\tilde{a }(t) =\ln[4\cosh(\b J-t)\cosh(\b J+t)].$

By Stolz-Ces\'aro theorem (see e.g. \cite{K}) applied to the sequences
\begin{equation}
 a_n=  \sum_{m=0}^n|W_m| \tilde{a}(t_m), \quad    b_n=|V_n|
\end{equation}
one has
\begin{multline}
\lim_{n\to \infty}{1\over |V_n|}
\sum_{m=0}^{n}|W_m| \   \tilde{a }(t_m)
 =\lim_{n\to \infty}
\frac{\displaystyle \sum_{m=0}^{n+1} |W_m| \   \tilde{a }(t_m)
- \sum_{m=0}^{n}   |W_m| \   \tilde{a}(t_m)}{|V_{n+1}| -|V_n| }
\\
=\lim_{n\to \infty}
\frac{  |W_{n+1}| \   \tilde{a }(t_{n+1})
}{|W_{n+1}|
}
=\lim_{n\to \infty} \tilde{a }(t_{n+1})
=\ \tilde{a}(0).
\end{multline}
As a consequence
\begin{equation}
 F_{\rm Zach}(\beta, h^{(t)})
= F_{\rm TI}(\beta, 0).
\end{equation}

\subsection{ART construction}

Let $h$ be a boundary condition satisfying (\ref{***}) on
$ \Gamma^{k_0}$.
 For $k\geq k_0+1$ define the following boundary condition on
 $ \Gamma^{k}$:
\begin{equation}\label{1}
\tilde{h}_x=\left\{\begin{array}{ll}
h_x, \ \ \mbox{if} \ \ x\in V^{k_0}\\[2mm]
0, \ \ \mbox{if} \ \ x\in V^k\setminus V^{k_0},\\
\end{array}\right.
\end{equation}
where $V^k$ denote the set of  vertices of $\Gamma^k$.
Namely,
 to each vertices of $V^{k_0}$ one adds $k-k_0$ successors with
vanishing value of the boundary condition.
It is obvious the b.c. $\tilde{h}$ satisfy the compatibility condition (\ref{***}).
In this way one constructs a new set of Gibbs measures that are extreme in the range  $1/k_0 < \theta < 1/ \sqrt{k}$ \cite{Ak}.

For the corresponding free energy, we have
\begin{equation}\label{art}
F_{\rm ART}(\beta, \tilde{h})=-{1\over \beta}\lim_{n\to \infty}{1\over |V^k_n|}
\left\{\left(|V^k_{n}|-|V^{k_0}_{n}|\right)\ln[2\cosh(\b J)]+
\beta\sum_{x\in V^{k_0}_{n}}a(x)\right\}.
\end{equation}
Since
$$\lim_{n\to \infty}{|V^{k_0}_{n}|\over |V^k_n|}={k-1\over k_0-1}\lim_{n\to\infty} {(k_0+1)k^n_0-2\over (k+1)k^n-2}=0,$$ by taking into account  $0 \leq a(x)\leq C_b$,
we get
$$0 \leq \sum_{x\in V_n^{k_0}}a(x)\leq |V_n^{k_0}|C_\b.$$
As a consequence,
$$\lim_{n\to \infty}{1\over |V^k_n|}
\sum_{x\in V^{k_0}_{n}}a(x)=0,$$
so that
\begin{equation}
F_{\rm ART}(\beta, \tilde{h})=-{1\over \beta}\ln[2\cosh(\b J)]=F_{\rm TI}(\beta,0).
\end{equation}

\subsection{Entropy}

To compute the entropy ${S}(\beta,h)=-{dF(\beta,h)\over dT}$, we first notice that equation (\ref{betah})
gives

\begin{equation}
h'(\beta )=\frac{1}{\beta '(h)}=J k \, \frac{\cosh (2 h) - \cosh \left(\frac{2 h}{k}\right)}{\sinh (2 h) - k \sinh \left(\frac{2 h}{k}\right)}.
\end{equation}
As a result of easy computations, we get the formula
\begin{equation}
\begin{split}
S(\beta ,h) & = \frac{1}{2} \ln \big[2 \cosh (2 h)+2 \cosh (2 \beta J )\big]  \\
& +
\beta J \, \frac{ k^2 \sinh (2 h) \frac{\cosh (2 h) - \cosh \left(\frac{2h}{k}\right)}
{\sinh (2 h) - k \sinh \left(\frac{2 h}{k}\right)} + \sinh (2 \beta J )}{\cosh (2 h)+\cosh (2 \beta J )}
\end{split}
\end{equation}
and
\begin{equation}
{S}(\beta,0)=\ln(2\cosh(\b J))-\b J\tanh(\b J).
\end{equation}

Let us mention that in case $k=2$, we get  by solving equation (\ref{f}) the following  value of $h_*$ as a function of the inverse temperature:
\begin{equation}\label{h}
\pm h_*={1\over 2}\ln\left[2^{-1}\left(e^{4\b J}-2e^{2\b J}-1\pm (e^{2\b J}-1)\sqrt{(e^{2\b J}+1)(e^{2\b J}-3)}\right)\right].
\end{equation}

The free energies  and entropies then read
\begin{eqnarray}
\label{tih}
F(\beta, h_*)&=&-{1\over \beta}\left(\ln(e^{2\b J}-1)+{1\over 2}\ln(e^{-2\b J}+1)\right),\\
S(\beta, h_*)&=&\ln(2\sinh(\b J))+{1\over 2}\ln(2\cosh(\b J))-\b J {3\cosh(2\b J)+1\over 2\sinh(2\b J)}.
\end{eqnarray}

\section{Periodic and weakly periodic Gibbs measures}

\subsection{A group representation of the Cayley tree}
Let $G_k$ be a free product of $k + 1$ cyclic groups of the second order with generators $a_1, a_2,\dots, a_{k+1}$,
respectively.

It is known that there exists a one-to-one correspondence between the set of vertices $V$ of the
Cayley tree $\Gamma^k$ and the group $G_k$.

To give this correspondence we fix an arbitrary element $x_0\in V$ and let it correspond to the unit element $e$ of the group $G_k$. Using $a_1,\dots,a_{k+1}$ we numerate nearest-neighbors of element $e$, moving by positive direction (see Fig. \ref{figure2}). Now we shall give numeration of the nearest-neighbors of each $a_i$, $i=1,\dots, k+1$ by $a_ia_j$, $j=1,\dots,k+1$. Since all $a_i$ have the common neighbor $e$ we give to it $a_ia_i=a_i^2=e$. Other neighbors are numerated starting from $a_ia_i$ by the positive direction. We numerate the set of all nearest-neighbors of each $a_ia_j$ by words $a_ia_ja_q$, $q=1,\dots,k+1$, starting from $a_ia_ja_j=a_i$ by the positive direction. Iterating this argument one gets
a one-to-one correspondence between the set of vertices $V$ of the
Cayley tree $\Gamma^k$ and the group $G_k$.

\begin{figure}
  \includegraphics[width=11cm]{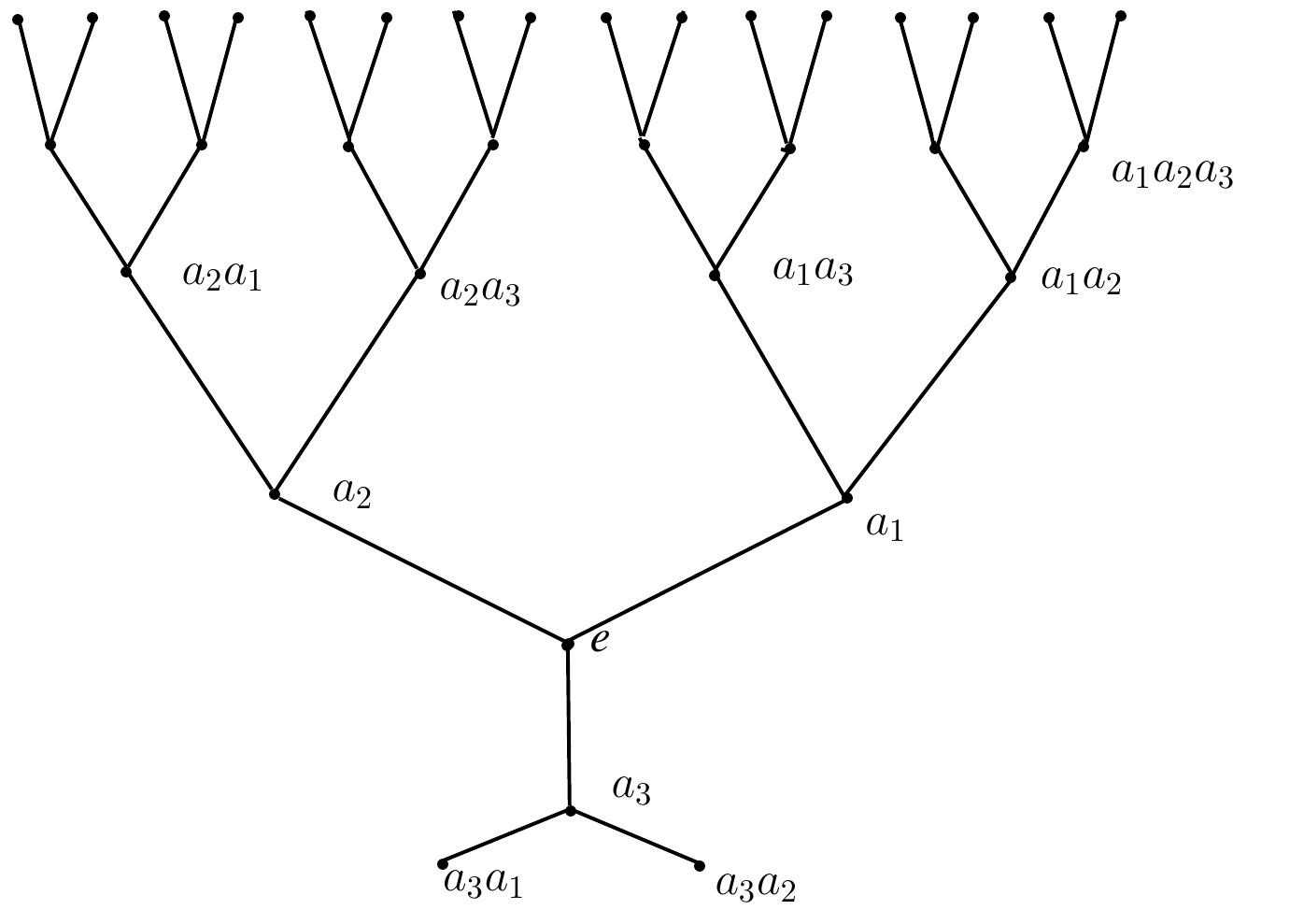}\\
  \caption{Some elements of group $G_2$ on Cayley tree of order two.}
\label{figure2}
\end{figure}

In the group $G_k$, let us consider the left (right) shift transformations defined as follows.
For $g_0\in G_k$, let us set
\begin{equation}
T_g(h) =gh,\ \ (T_g(h) = hg), \ \ \mbox{for all}\ \ h\in G_k.
\end{equation}
The set of all left  (right) shifts in $G_k$ is isomorphic to the group $G_k$.

\subsection{Periodic boundary conditions}

In this subsection, we consider periodic solutions
of (\ref{***}) and use
the above
group structure of the Cayley tree.

\begin{defn} Let ${\tilde G}$ be a normal subgroup of the group $G_k$. The set $h = \{h_x: x\in G_k\}$
 is said to be ${\tilde G}$-periodic if $h_{yx} =h_x$ for any $x\in G_k$ and $y\in {\tilde G}$.
 \end{defn}

Let

$$G^{(2)}_k = \{x\in G_k: \, \mbox{the length of word} \, x \, \mbox{is even}\}.$$
 Note that $G^{(2)}_k$ is  the  set of even vertices (i.e. with even distance to the root). Consider the boundary conditions
 ${h}^{\pm}$ and ${h}^{\mp}$:

\begin{equation}\label{PBC}
{h}_x^{\pm}=- {h}_x^{\mp} =
\left\{\begin{array}{ll}
h_*, \ \ \mbox{if} \ \ x\in G^{(2)}_k\\[2mm]
-h_*, \ \ \mbox{if} \ \ x\in G_k\setminus G^{(2)}_k,\\
\end{array}\right.
\end{equation}
and denote by $\mu^{(\mp)}, \mu^{(\pm)}$) the corresponding Gibbs measures.

The $\tilde{G}$- periodic solutions of equation (\ref{***}) are either translation-invariant ($G_k$-periodic) or $G^{(2)}_k$-periodic (see \cite{GR}), they are solutions to

\begin{equation}\label{ff}
u=kf(v,\theta), \ \ v=kf(u, \theta).
\end{equation}

In the ferromagnetic case only translation invariant b.c. can be found.
In the antiferromagnetic  case ($\theta \leq 0$) the system (\ref{ff}) has
 a unique solution $h=0$ if $ \theta\geq -1/k$, and three distinct solutions $h= 0$, ${h}^{\pm}$ and ${h}^{\mp}$ if
 $ \theta < -1/k$.

    Let us also recall that for the antiferromagnetic Ising model:
    \begin{itemize}
    \item[(1)] If $ \theta  \geq -1/k$,  $\mu_0$ is unique and extreme.

    \item[(2)] If $\theta <- 1/k$,   $\mu^{(\pm)}$ and $  \mu^{(\mp)}$, are extreme.

    \end{itemize}
 see \cite{Ge}.

 For periodic measures, we have
$$F_{\rm Per}(\b, h^{(\pm)}_*)= -{1\over 2\beta}\ln[4\cosh(\b J-h_*)\cosh(\b J+h_*)]=F_{\rm TI}(\b, h_*).$$

\subsection{Weakly periodic Gibbs measures.\label{SWP}} Let $G_k/\widehat{G}_k=\{H_1,...,H_r\}$  be a factor group, where
$\widehat{G}_k$ is a normal subgroup of index $r\geq 1$.

\begin{defn}\label{wp} A set  $h=\{h_x,x\in G_k\}$ is called
 $\widehat{G}_k$ - \textit{weakly periodic}, if
$h_x=h_{ij}$, for any $x\in H_i, x_{\downarrow}\in H_j$, where $x_{\downarrow}$ denotes the ancestor of $x$.
\end{defn}

Weakly periodic b.c.  $h$ coincide with periodic ones  if
 $h_x$ is independent of $x_{\downarrow}$.


Here, we will restrict ourself to the cases of index two and
recall that
any such subgroup has the form
$$H_A=\left\{x\in G_k:\sum\limits_{i\in A}\omega_x(a_i)-{\rm even} \right\},$$
where $\emptyset \neq A\subseteq N_k=\{1,2,\dots,k+1\}$, and $\omega_x(a_i)$ is the
number of $a_i$ in a word $x\in G_k$.
We consider  $A\ne N_k$: when  $A = N_k$ weak periodicity coincides
with standard periodicity.

 Let
$G_k/H_A=\{H_0, H_1\}$ be the  factor group, where  $H_0=H_A,
H_1=G_k\setminus H_A$.
 Then, in view of (\ref{***}), the
$H_A$-weakly periodic b.c.   has the form
\begin{equation}\label{wp5}
h_x=\left\{%
\begin{array}{ll}
    h_{1}, & {x \in H_0, \ x_{\downarrow} \in H_0}, \\[2mm]
    h_{2}, & {x \in H_0, \ x_{\downarrow} \in H_1}, \\[2mm]
    h_{3}, & {x \in H_1, \ x_{\downarrow} \in H_0}, \\[2mm]
    h_{4}, & { x \in H_1, \ x_{\downarrow}  \in H_1,}
\end{array}%
\right.\end{equation}
where  the $h_{i}$ satisfy the following equations:
\begin{equation}\label{wp6}
\left\{%
\begin{array}{ll}
    h_{1}=|A|f(h_{3},\theta)+(k-|A|)f(h_{1},\theta),\\[2mm]
    h_{2}=(|A|-1)f(h_{3},\theta)+(k+1-|A|)f(h_{1},\theta),\\[2mm]
    h_{3}=(|A|-1)f(h_{2},\theta)+(k+1-|A|)f(h_{4},\theta),\\[2mm]
    h_{4}=|A|f(h_{2},\theta)+(k-|A|)f(h_{4},\theta).
\end{array}%
\right.\end{equation}

For sake of simplicity, consider $k=4$ and $|A|=1$.
In this case
$$H_0=\{x\in G_k:\ \ \omega_x(a_1) \ \ \mbox{is even}\},$$
$$H_1=\{x\in G_k:\ \ \omega_x(a_1) \ \ \mbox{is odd}\}.$$
These sets are shown in Fig. \ref {WP}.

\begin{figure}
  \includegraphics[width=11cm]{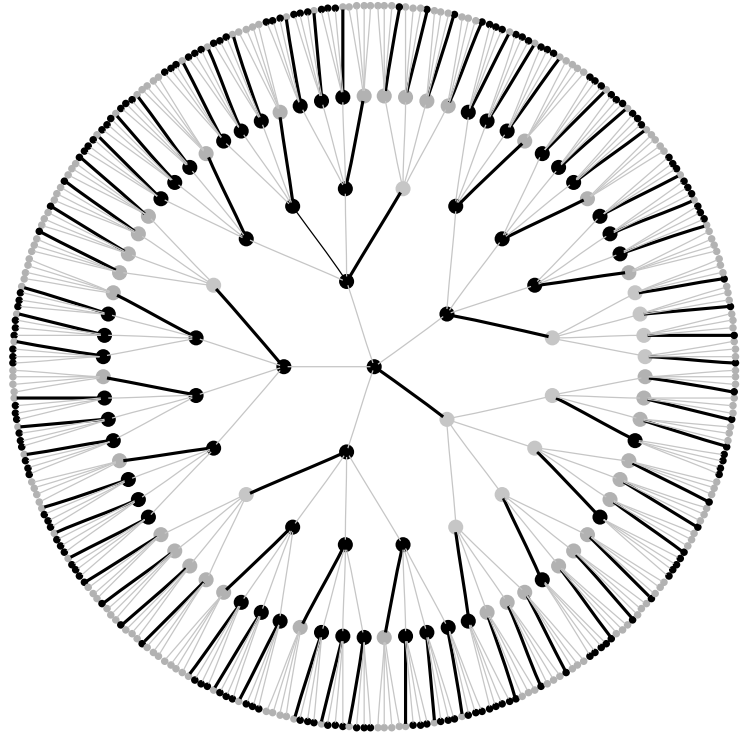}\\
  \caption{The sets $H_0$ (black vertices) and $H_1 $ (gray vertices).}\label{WP}
\end{figure}

Let us recall the following results of \cite{RR}:

There exists a critical value $\alpha_{\rm cr}$ $(\approx 0,1569)$ of  $\alpha=e^{-2\b J}$
such that:
  \begin{itemize}
    \item[(1)] If $\alpha > \alpha_{\rm cr}$, there exists a unique  weakly periodic state
 $\mu_0$,

    \item[(2)] If $\alpha = \alpha_{\rm cr}$,  there are three distinct weakly periodic states
$\mu_0, \mu^-_1, \mu^+_1$.

 \item[(3)] If $0\leq \alpha < \alpha_{\rm cr}$,  there are five distinct weakly periodic states
 $\mu_0, \mu^-_1, \mu^+_1, \mu^-_2, \mu^+_2$.
\end{itemize}
These measures correspond to solutions  of system  (\ref{wp6}) on the invariant set
$h_1=-h_4$, $h_2=-h_3$, that is to solutions of:
\begin{equation}\label{wpj7}
\left\{%
\begin{array}{ll}
    h_{1}=3 f(h_{1},\theta)-f(h_{2},\theta),\\[2mm]
    h_{2}=4 f(h_{1},\theta).\\[2mm]
    \end{array}%
\right.\end{equation}
More results about weakly periodic Gibbs measures can be found in  \cite{RR}, \cite{RR1}.

Denote
$$\mathcal A_n=\left|\left\{\langle x, y\rangle\in L_n:\, x\in H_0,\,
y=x_{\downarrow}\in H_0\right\}\right|,$$
\begin{equation}\label{AB}
\mathcal B_n=\left|\left\{\langle x, y\rangle\in L_n:\, x\in H_0,\,
y=x_{\downarrow}\in H_1\right\}\right|,
\end{equation}
$$\mathcal C_n=\left|\left\{\langle x, y\rangle\in L_n:\, x\in H_1,\,
y=x_{\downarrow}\in H_0\right\}\right|,$$
$$\mathcal D_n=\left|\left\{\langle x, y\rangle\in L_n:\, x\in H_1,\,
y=x_{\downarrow}\in H_1\right\}\right|,
$$
where $L_n$ is the set of edges in $V_n$.

For weakly periodic b.c. (\ref{wpj7}) we have
$$F_{\rm WP}(\beta, h)=-{1\over 2\beta}\lim_{n\to \infty}{1\over |V_n|}
\left\{\left(\mathcal A_{n}+\mathcal D_{n}\right)\ln[4\cosh(\b J-h_1)\cosh(\b J+h_1)]+\right.$$ $$\left.
(\mathcal B_{n}+\mathcal C_{n})\ln[4\cosh(\b J-h_2)\cosh(\b J+h_2)]\right\}.$$

By Proposition \ref{l4} of the appendix  we obtain
\begin{multline}
F_{\rm WP}(\beta, h)
=-{1\over 2\beta}
\left\{
{4\over 5}\ln[4\cosh(J\b-h_1)\cosh(J\b+h_1)]\right.
\\
+
\left.
{1\over 5}\ln[4\cosh(J\b-h_2)\cosh(J\b+h_2)]
\right\}
 \end{multline}
In the case under consideration the proposition is straightforward. Indeed one immediately see, in view of definitions and Fig. \ref{WP}, that
$ (A_{n}+\mathcal D_{n})/(|V_n|-1)=4/5$ and
$ (B_{n}+\mathcal C_{n})/(|V_n|-1)=1/5$ when $ n=1$; the induction is  trivial.

The equation
\begin{equation}\label{rr}
h_1=3f(h_1,\theta)-f(4f(h_1,\theta),\theta)
\end{equation}
that solves the system (\ref{wpj7})  (with
 $h_2=4f(h_1,\theta)$)
can then  be  reduced to:
\begin{equation}\label{rrx}
\alpha^2\xi^3-\alpha \xi^2-2\alpha^2\xi+\alpha+1=0,
\end{equation}
which
has two solutions $\xi_1$ and $\xi_2$
when  $0<\alpha<\alpha_{\rm cr}$ (see \cite{RR}).

The free energy then reads:
\begin{multline}
 F_{\rm WP}(\beta, h)=\nonumber
 \\
 -{1\over 10\beta}\left\{4\ln\left[2\cosh(2\b J)+{2\xi\cosh(2\b J)-4\over 2\cosh(2\b J)-\xi} \right]
 +
\ln[2\cosh(2\b J)+\xi^4-4\xi^2+2]\right\}.
\end{multline}

The  solution $\xi_1$ is given by
$$
\xi _1=\frac{1}{3} \left(e^{2 \beta }+\frac{2^{1/3} \left(6+e^{4 \beta }\right)}{U}+2^{-1/3} U \right),
$$
where
$$
U=\left(-9 e^{2 \beta }-27 e^{4 \beta }+2 e^{6 \beta }+3 \sqrt{-96-39 e^{4 \beta }+54 e^{6 \beta }+69 e^{8 \beta }-12 e^{10 \beta}}\right)^{1/3}.
$$
The corresponding free energy reads
$$
F_{\rm WP}(\beta) = -\frac{1}{10 \beta } \ln \frac{\left(-e^{-2 \beta }+e^{2 \beta }\right)^8 \left(2+e^{-2 \beta }+e^{2 \beta }
-4\left(e^{4 \beta }\frac{W}{6}\right)^2+\left(e^{4 \beta }\frac{W}{6}\right)^4\right)}{\left(e^{-2 \beta }+e^{2 \beta }- e^{4 \beta } \frac{W}{6}\right)^4},
$$

where
$$
W=\left(2 e^{-2 \beta }+\frac{V}{U}+2^{2/3} e^{-4 \beta } U\right)
\\
\quad
\text{with}
\quad
V=2^{4/3} e^{-4 \beta } \left(6+e^{4 \beta }\right).
$$
The plot is given in Fig. \ref{WPPlot} from which we observe the strict inequalities
\begin{equation}
F_{\rm TI}(\b, h_*)
<F_{\rm WP}(\b, h)
<F_{\rm TI}(\b, 0).
\end{equation}
  in the range  $\alpha \leq \alpha_{\rm cr}$.

\begin{figure}[h]
  \includegraphics[width=11cm]{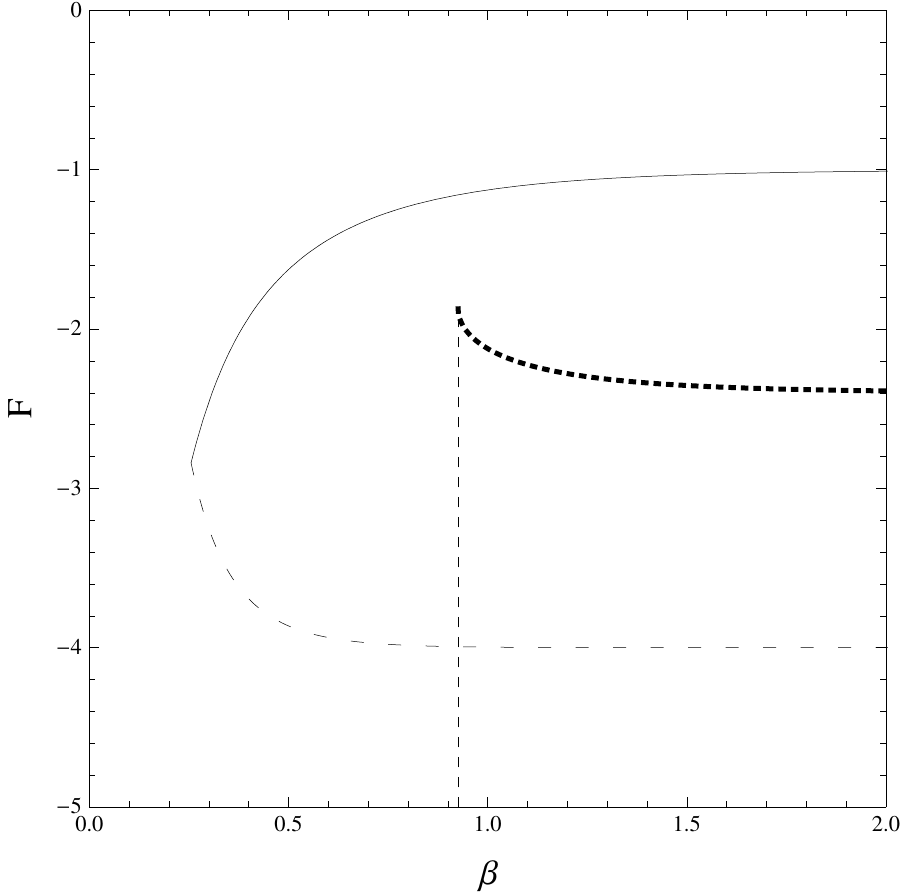}\\
  \caption{The free energies $F_{\rm WP}(\b)$ (dotted line) for $\alpha \leq \alpha_{\rm cr}$ together with the previous free energies   $F_{\rm TI}(\b, 0)$ (solid line),
and $F_{\rm TI}(\b, h_*)$ (dashed line). Here $J=1$ and $k=4$.}\label{WPPlot}
\end{figure}

The results for weakly periodic boundary conditions have to be compared with the inequalities given recently in  \cite{GRS}; the weakly periodic b.c.\  (\ref{wpj7})  corresponding to the so-called dimer covering in  \cite{GRS}. There,  the inequalities are easier to catch with cluster expansion method in mind, the condition on the   temperature is more restrictive,
and the  free energies cannot be expressed explicitly.

\section*{Appendix: Density of edges in a ball}

\renewcommand{\theequation}{A.\arabic{equation}}
\renewcommand{\thesection}{A}
\setcounter{equation}{0}


 In this appendix we consider a group representation of a Cayley tree and
its partition with respect to an arbitrary subgroup of index two. This partition gives a 2-vertex-coloring
on Cayley tree, which then gives 4-edge-coloring, say, colors $i=1,2,3,4$. We fix a root of the Cayley tree and
give explicit formulas for number $\mathcal A_{n,i}$ of edges with color $i$ in a ball $V_n$ of radius $n$ with
the center at the root. Moreover, we compute the $\lim_{n\to\infty}(\mathcal A_{n,i}/|V_n|)$ for each $i=1,2,3,4$.

We will use the notation of Subsection  \ref{SWP}
and let
$$\a_n=\left|\left\{\langle x, y\rangle\in L_n\setminus L_{n-1} :\, x\in H_0,\,
y=x_{\downarrow}\in H_0\right\}\right|.$$
$$\b_n=\left|\left\{\langle x, y\rangle\in L_n\setminus L_{n-1} :\, x\in H_0,\,
y=x_{\downarrow}\in H_1\right\}\right|.$$
$$\c_n=\left|\left\{\langle x, y\rangle\in L_n\setminus L_{n-1} :\, x\in H_1,\,
y=x_{\downarrow}\in H_0\right\}\right|.$$
$$\d_n=\left|\left\{\langle x, y\rangle\in L_n\setminus L_{n-1} :\, x\in H_1,\,
y=x_{\downarrow}\in H_1\right\}\right|,$$
for  $A=\{1,2,3,...,j\}$ and  $1\leq j\leq k+1$.


Let  $M$ be the set of all unit balls with vertices in $V$ and let $S_1(x)$ denotes the set of all nearest neighbors of $x$. For $b\in M$ the center of $b$ is denoted by $c_b$.

\begin{lemma}\label{l1} If $c_b\in H_0$, then
$$
\left|\{x\in S_1(c_b): x\in H_1\}\right|=j,\ \  \left|\{x\in S_1(c_b): x\in
H_0\}\right|=k-j+1.
$$
If $c_b\in H_1$, then
$$
\left|\{x\in S_1(c_b): x\in H_1\}\right|=k-j+1, \ \ \left|\{x\in S_1(c_b): x\in
H_0\}\right|=j.
$$
\end{lemma}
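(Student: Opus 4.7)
The plan is to use directly the group-theoretic description of the Cayley tree from Subsection 3.1. Under that identification, the $k+1$ nearest neighbors of a vertex $x \in G_k$ are precisely the right products $xa_1, xa_2, \ldots, xa_{k+1}$, so $S_1(c_b) = \{c_b a_1, \ldots, c_b a_{k+1}\}$. The problem therefore reduces to determining, for each generator $a_i$, whether $c_b$ and $c_b a_i$ lie in the same coset of $H_A$.

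The key observation is the following parity rule. If $y = xa_i$, then in the reduced word for $y$ we either append $a_i$ at the end of $x$ (when the last letter of $x$ is not $a_i$) or cancel the final $a_i$ of $x$ (when it is). In both cases $\omega_y(a_i) = \omega_x(a_i)\pm 1$, while $\omega_y(a_{i'}) = \omega_x(a_{i'})$ for every $i' \neq i$. Hence
\[
\sum_{i' \in A}\omega_y(a_{i'}) \equiv \sum_{i' \in A}\omega_x(a_{i'}) + \mathbf{1}_{\{i\in A\}} \pmod{2}.
\]
Consequently, multiplication on the right by $a_i$ flips the $H_A$-parity exactly when $i\in A$, and preserves it when $i \notin A$.

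From here the count is immediate. Since $A = \{1,2,\ldots,j\}$ has cardinality $j$, among the $k+1$ generators exactly $j$ lie in $A$ and exactly $k+1-j$ lie outside $A$. If $c_b \in H_0 = H_A$, the $j$ neighbors $c_b a_i$ with $i \in A$ have flipped parity, hence lie in $H_1$, while the remaining $k+1-j$ neighbors retain their parity and lie in $H_0$. The case $c_b \in H_1$ is symmetric: the $j$ parity-flipping multiplications now send $c_b$ into $H_0$ and the $k+1-j$ others keep it in $H_1$. This proves both assertions of the lemma.

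The only substantive point to justify carefully is the parity rule in the second step, since $G_k$ is a free product with the relations $a_i^2 = e$ and one must be sure that the cancellation case does not spoil the argument. It does not, because $\pm 1$ are equal modulo $2$; this is the sole potential obstacle and it is harmless.
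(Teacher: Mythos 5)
Your proof is correct and follows essentially the same route as the paper: identify $S_1(c_b)=\{c_ba_p: p=1,\dots,k+1\}$ via the group representation and observe that right multiplication by $a_p$ changes the coset of $H_A$ exactly when $p\in A=\{1,\dots,j\}$. The paper simply asserts this coset membership, whereas you also spell out the parity justification (appending versus cancelling a generator both change $\omega_x(a_p)$ by $\pm 1$, which is the same mod $2$), a detail the paper leaves implicit.
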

\begin{proof}
We have $S_1(c_b)=\{c_ba_p : p=1, 2, \dots, k+1\}.$

If $c_b\in H_0$, (the case $c_b\in H_1$ is similar) then $c_ba_p\in H_1$ for
$p=1,2,...,j$ and $c_ba_p\in H_0$ for $p=j+1,j+2,...,k+1,$ i.e. we have
$$ \left|\{x\in S_1(c_b): x\in H_1\}\right|=j, \ \  \left|\{x\in S_1(c_b):
x\in H_0\}\right|=k-j+1.
$$
\end{proof}

Consider $b=V_1\in M$ with the center $x^0=e\in H_0$, then in
$W_1$ we have $j$ vertices which belong to $H_1$, and $k-j+1$
vertices which belong in $H_0$, consequently,
$$\a_1=k-j+1, \ \  \b_1=0,  \ \ \c_1=j, \ \ \d_1=0.$$

\begin{lemma}\label{l2} For any $n\in N$ the following recurrence system hold
\begin{equation}\label{m1}
\left\{\begin{array}{llll}
    \a_{n+1}=(k-j)\a_n+(k-j+1)\b_n \\[2mm]
    \b_{n+1}=(j-1)\c_n+j\d_n \\[2mm]
    \c_{n+1}=(j-1)\b_n+j\a_n \\[2mm]
    \d_{n+1}=(k-j)\d_n+(k-j+1)\c_n,
\end{array}
\right.
\end{equation}
with initial values $\a_1=k-j+1$, $\b_1=0$, $\c_1=j$, $\d_1=0$.
\end{lemma}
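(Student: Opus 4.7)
The plan is to reinterpret the four quantities as counts of vertices at level $n$ classified by their own $H_0/H_1$ membership and that of their ancestor, then apply Lemma \ref{l1} at each parent $y \in W_n$ to count its children in a given class.

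First, since each edge in $L_n \setminus L_{n-1}$ is in bijection with a vertex $x \in W_n$ via $\langle x, x_{\downarrow}\rangle$, we rewrite
$$\a_n = |\{x \in W_n : x \in H_0,\ x_{\downarrow} \in H_0\}|,$$
and similarly for $\b_n, \c_n, \d_n$, the only difference being the membership pattern of $x$ and $x_{\downarrow}$. In particular, $\a_n + \b_n = |W_n \cap H_0|$ and $\c_n + \d_n = |W_n \cap H_1|$.

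To derive the recurrence for $\a_{n+1}$, fix $y \in W_n \cap H_0$ and count its children in $H_0$, i.e. the elements of $S(y)\cap H_0$. Applying Lemma \ref{l1} to the unit ball $b$ with center $c_b = y \in H_0$, exactly $k-j+1$ of the $k+1$ neighbors of $y$ lie in $H_0$. One of these $k+1$ neighbors is the ancestor $y_{\downarrow}$, and the remaining $k$ are precisely the children. Hence if $y_{\downarrow} \in H_0$ then $y$ has $k-j$ children in $H_0$, while if $y_{\downarrow} \in H_1$ then $y$ has $k-j+1$ children in $H_0$. Summing over the $\a_n$ vertices in $W_n\cap H_0$ with ancestor in $H_0$ and the $\b_n$ vertices in $W_n\cap H_0$ with ancestor in $H_1$ gives
$$\a_{n+1} = (k-j)\a_n + (k-j+1)\b_n.$$

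The remaining three recurrences are obtained in exactly the same way. For $\b_{n+1}$ fix $y \in W_n \cap H_1$ and count $|S(y)\cap H_0|$; Lemma \ref{l1} gives $j$ neighbors of $y$ in $H_0$ in total, from which the ancestor contributes either $1$ (if $y_{\downarrow}\in H_0$, giving $j-1$ children in $H_0$, with $\c_n$ such $y$'s) or $0$ (if $y_{\downarrow}\in H_1$, giving $j$ children, with $\d_n$ such $y$'s). The analogous arguments for $\c_{n+1}$ (fixing $y \in W_n \cap H_0$ and counting children in $H_1$) and $\d_{n+1}$ (fixing $y \in W_n \cap H_1$ and counting children in $H_1$) give the last two identities in (\ref{m1}). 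The initial values at $n=1$ were already computed just before the statement, using $x^0=e\in H_0$ and Lemma \ref{l1}. The only point of care throughout is the $\pm 1$ adjustment from subtracting $y_{\downarrow}$ from the neighbor count, which is precisely what produces the two different coefficients appearing in each line of the system; there is no substantive obstacle beyond this bookkeeping.
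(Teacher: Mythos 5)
Your proof is correct and follows essentially the same route as the paper: both count, for each parent in $W_n$ of a given class, its successors of the prescribed class via Lemma \ref{l1}, with the $\pm 1$ correction for the ancestor producing the coefficients $(k-j)$ versus $(k-j+1)$ and $(j-1)$ versus $j$. Your rephrasing of the edge counts as vertex counts in $W_n$ is only a cosmetic difference from the paper's edge-to-child-edge formulation.
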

\begin{proof} By Lemma \ref{l1}, an edge $\langle x, y\rangle\in L_n\setminus L_{n-1}$ with $x\in H_0,\,
y=x_{\downarrow}\in H_0$ has $(k-j)$ neighbor edges $\langle z, x\rangle\in L_{n+1}\setminus L_{n}$ with $z\in H_0,\,
x=z_{\downarrow}\in H_0$. An edge $\langle z, t\rangle\in L_n\setminus L_{n-1}$ with $z\in H_0,\,
t=z_{\downarrow}\in H_1$ has $(k-j+1)$ neighbor edges $\langle u, z\rangle\in L_{n+1}\setminus L_{n}$ with $u\in H_0,\,
z=u_{\downarrow}\in H_0$. Moreover, it is easy to see that only $\a_n$ and $\b_n$ have contribution to $\alpha_{n+1}$. Hence we have $\a_{n+1}=(k-j)\a_n+(k-j+1)\b_n$. Other equations of the system (\ref{m1}) can be obtained by a similar way.
\end{proof}

\begin{rk}\label{r1} For $j=k+1$ by Lemmas \ref{l1} and \ref{l2} we get $\alpha_n=\delta_n=0$, for any $n\geq 1$ and
$$\beta_n=\left\{\begin{array}{ll}
0, \ \ \mbox{if} \ \ n=2m-1\\[2mm]
(k+1)k^{2m-1}, \ \ \mbox{if} \ \ n=2m
\end{array}
\right., \ \ m=1,2,\dots
$$
$$\gamma_n=\left\{\begin{array}{ll}
0, \ \ \mbox{if} \ \ n=2m\\[2mm]
(k+1)k^{2(m-1)}, \ \ \mbox{if} \ \ n=2m-1
\end{array}
\right., \ \ m=1,2,\dots
$$
So in the sequel of this section we consider $j$ as $1\leq j\leq k$.
\end{rk}

\begin{lemma}\label{l3} For $\a_n$ we have
\begin{equation}\label{m2}
\a_{n+2}=j(k-j+1)|W_n|+(k-2j)\a_{n+1}-(2j-1)\a_n-k\a_{n-1},  \ \ n\geq 2,
\end{equation}
with initial values
\begin{equation}\label{m3}
\a_1=k-j+1, \ \ \a_2=(k-j)(k-j+1), \ \
 \a_3=\left((k-1)^2+j(j-1)\right)(k-j+1)
 \end{equation}
\end{lemma}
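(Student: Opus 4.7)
The plan is to reduce the four-term vector recurrence (\ref{m1}) to a single scalar recurrence for $\a_n$ by eliminating $\b_n,\c_n,\d_n$ one at a time, using the global conservation law $\a_n+\b_n+\c_n+\d_n=|W_n|$ (each vertex of $W_n$ contributes exactly one edge to $L_n\setminus L_{n-1}$, namely the one joining it to its ancestor).

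First I would solve the first line of (\ref{m1}) for $\b_n$, giving $(k-j+1)\b_n=\a_{n+1}-(k-j)\a_n$, and substitute it into the third line $\c_{n+1}=(j-1)\b_n+j\a_n$. After simplification this yields the clean identity
\begin{equation*}
(k-j+1)\c_{n+1}=(j-1)\a_{n+1}+k\a_n,
\end{equation*}
so in particular $(k-j+1)\c_n=(j-1)\a_n+k\a_{n-1}$ for $n\ge 2$. Using $\a_n+\b_n+\c_n+\d_n=|W_n|=(k+1)k^{n-1}$, I then solve for $\d_n$ and plug in the expressions already obtained for $\b_n$ and $\c_n$; after collecting coefficients this gives
\begin{equation*}
(k-j+1)\d_n=(k-j+1)|W_n|-\a_{n+1}-j\a_n-k\a_{n-1}.
\end{equation*}

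With $\b_n,\c_n,\d_n$ all expressed in terms of $\a_{n-1},\a_n,\a_{n+1}$, I would finally invoke the second line of (\ref{m1}), written in the equivalent form $(k-j+1)\b_{n+1}=\a_{n+2}-(k-j)\a_{n+1}$ (from the first line shifted) combined with $\b_{n+1}=(j-1)\c_n+j\d_n$. Substituting the formulas for $\c_n$ and $\d_n$ and collecting like terms produces the stated recurrence (\ref{m2}); the coefficient of $\a_n$ comes out to $(j-1)^2-j^2=-(2j-1)$, the coefficient of $\a_{n+1}$ to $(k-j)-j=k-2j$, and the coefficient of $\a_{n-1}$ to $k(j-1)-jk=-k$.

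The initial values (\ref{m3}) are verified directly from (\ref{m1}): $\a_1=k-j+1$ is given, $\a_2=(k-j)\a_1+(k-j+1)\b_1=(k-j)(k-j+1)$, and computing $\b_2=(j-1)\c_1+j\d_1=j(j-1)$ yields $\a_3=(k-j+1)[(k-j)^2+j(j-1)]$. The only real obstacle is algebraic bookkeeping in the simplification step for the coefficient of $\a_n$; no new ideas beyond (\ref{m1}) and the trivial counting identity for $|W_n|$ are needed.
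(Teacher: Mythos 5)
Your proposal is correct and follows essentially the same route as the paper: there, too, one solves (\ref{m1}) for $\b_n,\c_n,\d_n$ in terms of $\a_{n-1},\a_n,\a_{n+1}$ (and $\b_{n+1}$) and eliminates them using the counting identity $\a_n+\b_n+\c_n+\d_n=|W_n|$; whether one substitutes everything into that identity, or, as you do, first solves it for $\d_n$ and then substitutes into $\b_{n+1}=(j-1)\c_n+j\d_n$, is an immaterial reordering of the same elimination, and your coefficient bookkeeping for (\ref{m2}) checks out. One point deserves attention: your value $\a_3=(k-j+1)\left[(k-j)^2+j(j-1)\right]$ is the correct one, and the printed $\a_3=(k-j+1)\left[(k-1)^2+j(j-1)\right]$ in (\ref{m3}) is a misprint --- the two coincide only for $j=1$ (the case $k=4$, $j=1$ worked out later), whereas for instance for $k=3$, $j=2$ direct counting via Lemma \ref{l2} gives $\a_3=6$, not $12$. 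This discrepancy does not affect Proposition \ref{l4}, whose limits are independent of the constants determined by the initial data.
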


\begin{proof} The initial values follow from Lemma \ref{l2}.

By definitions of $\a_n$, $\b_n$, $\c_n$, $\d_n$ we have
\begin{equation}\label{m4}
\a_n+\b_n+\c_n+\d_n=|W_n|=k^{n-1}(k+1), \ \ n\geq 1.
\end{equation}
From (\ref{m1}) we get
\begin{equation}\label{m0}\begin{array}{lll}
\b_n=\frac{1}{k-j+1}(\a_{n+1}-(k-j)\a_n),\\[2mm]
\c_n=\frac{j-1}{k-j+1}(\a_{n}-(k-j)\a_{n-1})+j\a_{n-1},\\[2mm]
\d_n=\frac{1}{j}(\b_{n+1}-(j-1)\c_n).
\end{array}
\end{equation} Substituting these values in (\ref{m4}) and then simplifying we get (\ref{m2}).
\end{proof}
To find solution of (\ref{m2}) we denote
\begin{equation}\label{m5}
\a_n={q_n}{k^{n-2}}.
\end{equation}
From (\ref{m2}) we get
$$
k^nq_{n+2}=k^{n-1}(k+1)(k-j+1)j+(k-2j)k^{n-1}q_{n+1}-(2j-1)k^{n-2}q_n-k^{n-2}q_{n-1},
$$
dividing by $k^n$ we obtain
\begin{equation}\label{m6}
q_{n+2}=\frac{(k+1)(k-j+1)j}{k}+\frac{k-2j}{k}q_{n+1}-
\frac{2j-1}{k^2}q_{n}-\frac{1}{k^2}q_{n-1},
\end{equation}
with initial values
\begin{equation}\label{m7}
q_1=k(k-j+1),\,
q_2=(k-j)(k-j+1),\,
q_3=\frac{\left((k-1)^2+j(j-1)\right)(k-j+1)}{k}.
\end{equation}

In order to find solution to (\ref{m6}) first we rid $\frac{(k+1)(k-j+1)j}{k}$ by denoting
\begin{equation}\label{m8}
q_n=p_n+\frac{k(k-j+1)}{2}.
\end{equation}
Substituting (\ref{m8}) to (\ref{m6}) we get
\begin{equation}\label{m9}
p_{n+2}=\frac{k-2j}{k}p_{n+1}-\frac{2j-1}{k^2}p_{n}-\frac{1}{k^2}p_{n-1},
\end{equation}
with
\begin{equation}\label{m10}
p_1=\frac{1}{2}k(k-j+1), \, p_2=\frac{(k-2j)(k-j+1)}{2},\,
p_3=\frac{(k^2-4k+2+2j^2-2j)(k-j+1)}{2k}.
\end{equation}

The characteristic equation for (\ref{m9}) has the following form (setting $p_n=\lambda^n $):
$$
\lambda^{3}-\frac{k-2j}{k}\lambda^{2}+\frac{2j-1}{k^2}\lambda-\frac{1}{k^2}=0,
$$
which has solutions
\begin{equation}\label{la}
\lambda_1=-\frac{1}{k},\ \ \lambda_{2,3}=\frac{k-2j+1\pm \sqrt{(k-2j)^2-2(k+2j)+1}}{2k}.
\end{equation}

Then the general solution to (\ref{m9}) is
\begin{equation}\label{m11}
p_n=A_1\lambda_1^n+A_2 \lambda_2^n
+A_3\lambda_3^n,
\end{equation}
where the coefficients $A_1$, $A_2$, $A_3$ are determined by the initial conditions (\ref{m10}).

Using (\ref{m8}) and (\ref{m5}) we get

\begin{equation}\label{m11a}
\alpha_n={k-j+1\over 2}k^{n-1}+A_1\cdot {(-1)^n\over k^2}+{A_2\over k^2}\cdot \left(k\lambda_2\right)^n+
{A_3\over k^2}\cdot \left(k\lambda_3\right)^n.
\end{equation}

Then using (\ref{m0}) and (\ref{m11a}) one can find $\b_n$, $\c_n$ and $\d_n$.

We have
$$
{\mathcal A_n}=\sum_{m=1}^n \a_m={k-j+1\over 2(k-1)}\left(k^n-1\right)+{A_1\over 2k^2}\left((-1)^n-1\right)+$$
\begin{equation}\label{um}
{A_2\lambda_2k\over k^2(\lambda_2k-1)}\left((\lambda_2k)^n-1\right)+{A_3\lambda_3k\over k^2(\lambda_3k-1)}\left((\lambda_3k)^n-1\right).
\end{equation}

\begin{pro}\label{l4} For any $j=1, \dots, k$ and any fixed $q=0,1,2,\dots$ we have
$$\lim_{n\to\infty} {\a_{n-q}\over |V_n|}=\lim_{n\to\infty} {\d_{n-q}\over |V_n|}={(k-1)(k-j+1)\over 2(k+1)k^{q+1}}.$$
$$\lim_{n\to\infty} {\b_{n-q}\over |V_n|}=\lim_{n\to\infty} {\c_{n-q}\over |V_n|}={(k-1)j\over 2(k+1)k^{q+1}}.$$
$$\lim_{n\to\infty} {\mathcal A_{n-q}\over |V_n|}=\lim_{n\to\infty} {\mathcal D_{n-q}\over |V_n|}={{k-j+1}\over 2(k+1)k^q}.$$
$$\lim_{n\to\infty} {\mathcal B_{n-q}\over |V_n|}=\lim_{n\to\infty} {\mathcal C_{n-q}\over |V_n|}={j\over 2(k+1)k^q}.$$
\end{pro}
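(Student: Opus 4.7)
I would derive all four limits of the proposition from the closed form (\ref{m11a}) for $\alpha_n$, combined with the recovery relations (\ref{m0}), the conservation (\ref{m4}), and the geometric expression $|V_n|=1+(k+1)(k^n-1)/(k-1)\sim \frac{k+1}{k-1}k^n$. The only non-routine step is the spectral estimate $|k\lambda_2|,|k\lambda_3|<k$, which makes the $k^{n-1}$ contribution in (\ref{m11a}) strictly dominant. Since $\lambda_1=-1/k$ is a root, Vieta gives $\lambda_2+\lambda_3=(k-2j+1)/k$ and $\lambda_2\lambda_3=1/k$, so $k\lambda_2$ and $k\lambda_3$ satisfy $\mu^2-(k-2j+1)\mu+k=0$. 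If the discriminant is negative the roots are complex conjugates with $|k\lambda_{2,3}|=\sqrt{k}<k$; if non-negative they are real of the same sign (the product being $+k$), so each is bounded in absolute value by their sum, which is $|k-2j+1|\le k-1<k$. This yields $\alpha_n=\tfrac{k-j+1}{2}k^{n-1}(1+o(1))$, and dividing by $|V_n|$ with a shift by $q$ gives the first stated limit.

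For $\delta_n$, adding the first and fourth equations of (\ref{m1}) and using (\ref{m4}) shows that $U_n:=\alpha_n+\delta_n$ satisfies $U_{n+1}+U_n=(k-j+1)(k+1)k^{n-1}$ with $U_1=k-j+1$, whose unique solution is $U_n=(k-j+1)k^{n-1}$ \emph{exactly}. Thus $\delta_n=(k-j+1)k^{n-1}-\alpha_n\sim\tfrac{k-j+1}{2}k^{n-1}$, and the limit for $\delta_{n-q}/|V_n|$ coincides with that for $\alpha_{n-q}/|V_n|$. The conservation (\ref{m4}) also gives $\beta_n+\gamma_n=jk^{n-1}$ exactly; inserting the leading asymptotic of $\alpha_m$ into the first line of (\ref{m0}) yields $\beta_n\sim\tfrac{j}{2}k^{n-1}$, and the exact identity forces the same asymptotic for $\gamma_n$. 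Division by $|V_n|$ provides the $\beta$ and $\gamma$ limits.

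For the cumulative counts I would apply Stolz-Ces\`aro, as in the Zachary argument of Section~2, using $|V_{n+1}|-|V_n|=|W_{n+1}|=(k+1)k^n$:
\begin{equation*}
\lim_{n\to\infty}\frac{\mathcal A_n}{|V_n|}=\lim_{n\to\infty}\frac{\alpha_{n+1}}{|W_{n+1}|}=\frac{k-j+1}{2(k+1)},
\end{equation*}
while the shift factor $|V_{n-q}|/|V_n|\to k^{-q}$ introduces the $k^q$ in the denominator. The three remaining cumulative limits are obtained identically, or alternatively by summing the leading geometric term in (\ref{um}) and observing that the contributions of the other geometric ratios vanish in the Ces\`aro sense.

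The main obstacle is the spectral bound $|k\lambda_{2,3}|<k$; the Vieta argument treats it uniformly in $(j,k)$, bypassing any case analysis on the sign of the discriminant and, crucially, avoiding the explicit determination of the coefficients $A_1,A_2,A_3$, whose complicated form coming from the initial conditions (\ref{m10}) is irrelevant once one observes that only the particular solution governs the limiting densities.
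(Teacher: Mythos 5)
Your proof is correct, and it shares the paper's backbone---extracting the dominant term $\tfrac{k-j+1}{2}k^{n-1}$ from the closed form (\ref{m11a}) and dividing by $|V_n|$ from (\ref{vk})---but it handles three steps genuinely differently. First, where the paper only asserts that $|\lambda_2|,|\lambda_3|<1$ is ``easy to check'', you prove it by Vieta: since $\lambda_1=-1/k$ is a root of the characteristic polynomial of (\ref{m9}) (the equation as displayed in the paper has a sign slip in its constant term, which should be $+1/k^2$; the roots listed in (\ref{la}) are those of the correct polynomial, and your relations $\lambda_2+\lambda_3=(k-2j+1)/k$, $\lambda_2\lambda_3=1/k$ match them), the numbers $k\lambda_2,k\lambda_3$ solve $\mu^2-(k-2j+1)\mu+k=0$, and your case split (modulus $\sqrt{k}$ in the complex case; real roots of a common sign whose absolute values sum to $|k-2j+1|\le k-1$ otherwise) gives the bound uniformly in $1\le j\le k$, which is a genuine gain in rigor over the paper. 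Second, for $\delta_n$ and $\gamma_n$ the paper routes everything back through the recovery relations (\ref{m0}), whereas you deduce from (\ref{m1}) and (\ref{m4}) the exact identities $\alpha_n+\delta_n=(k-j+1)k^{n-1}$ and $\beta_n+\gamma_n=jk^{n-1}$ (both easily checked against the initial values), which make the coincidence of the $\alpha$- and $\delta$-limits, and of the $\beta$- and $\gamma$-limits, immediate. Third, for the cumulative counts the paper sums the geometric series explicitly in (\ref{um}) and then uses (\ref{m0a}), while you invoke Stolz--Ces\`aro together with $|V_{n-q}|/|V_n|\to k^{-q}$; this is shorter and completely bypasses the coefficients $A_1,A_2,A_3$, at the price of not producing the explicit formula (\ref{um}), which the paper keeps because it feeds the subsequent remark on exact values of $\alpha_n$ and $\mathcal A_n$.
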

\begin{proof} It is easy to check that $|\lambda_2|<1$ and $|\lambda_3|<1$, i.e.
$$ \left|k-2j+1\pm
\sqrt{(k-2j)^2-2(k+2j)+1}\right|< 2k, \ \ \mbox{for any}\ \ 1\leq j\leq k. $$
Using these inequalities, formula (\ref{m11a}) and (\ref{vk})
we get $$\lim_{n\to\infty} {\a_{n-q}\over |V_n|}={(k-1)(k-j+1)\over 2(k+1)k^{q+1}}.$$
Now using this formula together with
(\ref{m0}) we obtain
$$\lim_{n\to\infty}{\beta_{n-q}\over |V_n|}={1\over k-j+1}\left(\lim_{n\to\infty}{\a_{n+1-q}\over |V_n|}-
(k-j)\lim_{n\to\infty} {\a_{n-q}\over |V_n|}\right)={j(k-1)\over 2(k+1)k^{q+1}}.$$
The formulae involving  $\gamma_n$ and $\delta_n$ are obtained in a similar way.

By  (\ref{um}) and
\begin{equation}\label{vk}
|V_n|={(k+1)\cdot k^n-2\over k-1}
\end{equation}
we get
$$\lim_{n\to\infty} {\mathcal A_{n-q}\over |V_n|}={{k-j+1}\over 2(k+1)k^q}.$$
From (\ref{m0}) we get
\begin{equation}\label{m0a}\begin{array}{lll}
\mathcal B_n=\frac{1}{k-j+1}\left(\mathcal A_n-\alpha_1+\alpha_{n+1}-(k-j)\mathcal A_n\right),\\[2mm]
\mathcal C_n=\frac{j-1}{k-j+1}\left(\mathcal A_{n}-(k-j)\mathcal A_{n-1}\right)+j\mathcal A_{n-1},\\[2mm]
\mathcal D_n=\frac{1}{j}\left(\mathcal B_n-\b_1+\beta_{n+1}-(j-1)\mathcal C_n\right),
\end{array}
\end{equation}
that allows to prove the remaining formulae.
\end{proof}

\begin{rk} By Proposition \ref{l4} it is clear that the values of $A_1$, $A_2$, $A_3$ do not
give any contribution to the equalities of the proposition. This is why we did not compute $A_1$, $A_2$, $A_3$.
 But one can obtain the numbers by the initial conditions (\ref{m10}) for $p_n$.
    For example, in the case  $k=4$, $j=1$ from (\ref{m11}) and  (\ref{m10}) we have
$$
p_n=A_1(-\frac{1}{4})^n+A_2\left(\frac{3-\sqrt{7}i}{8}\right)^n+A_3\left(\frac{3+
\sqrt{7}i}{8}\right)^n,
$$
$$
p_1=8, \ \ p_2=4, \ \ p_3=1.
$$
The initial conditions give
$$
\left\{%
\begin{array}{llll}
    8=-\frac{A_1}{4}+A_2\frac{3-\sqrt{7}i}{8}+A_3\frac{3+\sqrt{7}i}{8}\\
    4=\frac{A_1}{16}+A_2\frac{1-3\sqrt{7}i}{32}+A_3\frac{1+3\sqrt{7}i}{32}\\
    1=-\frac{A_1}{64}+A_2\frac{-9-5\sqrt{7}i}{128}+A_3\frac{-9+5\sqrt{7}i}{128} \\
\end{array}%
\right.\Rightarrow
\left\{%
\begin{array}{llll}
   A_1=0\\
   A_2=4+\frac{20\sqrt{7}i}{7}\\
   A_3=4-\frac{20\sqrt{7}i}{7} \\
\end{array}%
\right..
$$

Consequently, (for $k=4, j=1$) we have
\begin{equation}\label{ma}
\a_n=2\cdot4^{n-1}+\frac{1}{4}\left(1-\frac{5\sqrt{7}i}{7}\right)\left(\frac{3+\sqrt{7}i}{2}\right)^n+
\frac{1}{4}\left(1+\frac{5\sqrt{7}i}{7}\right)\left(\frac{3-\sqrt{7}i}{2}\right)^n.
\end{equation}
Using (\ref{m0}) and (\ref{ma}) one can find $\b_n$, $\c_n$ and $\d_n$.
Moreover, we have
$$
{\mathcal A_n}=\sum_{m=1}^n \a_m
=\frac{2(4^n-1)}{3}+\frac{2\sqrt{7}i}{7}\cdot\frac{(3-\sqrt{7}i)^n-(3+\sqrt{7}i)^n}{2^n}.
$$
Note that $\alpha_n$ and $\mathcal A_n$ are natural numbers for any $n\geq 1$.
\end{rk}

\begin{rk} In the case $j=k+1$ by Remark \ref{r1} we get
 $$\lim_{n\to\infty} {\a_{n}\over |V_n|}=\lim_{n\to\infty} {\d_{n}\over |V_n|}=\lim_{n\to\infty} {\mathcal A_{n}\over |V_n|}=\lim_{n\to\infty} {\mathcal D_{n}\over |V_n|}=0.$$
$$\lim_{m\to\infty} {\b_{2m-1}\over |V_{2m-1}|}=\lim_{m\to\infty} {\gamma_{2m}\over |V_{2m}|}=0.$$
$$\lim_{m\to\infty} {\b_{2m}\over |V_{2m}|}=\lim_{m\to\infty} {\gamma_{2m-1}\over |V_{2m-1}|}={k-1\over k}.$$
$$\lim_{m\to\infty} {\mathcal B_{2m}\over |V_{2m}|}=\lim_{m\to\infty} {\mathcal C_{2m-1}\over |V_{2m-1}|}={k\over k+1}.$$
$$\lim_{m\to\infty} {\mathcal B_{2m-1}\over |V_{2m-1}|}=\lim_{m\to\infty} {\mathcal C_{2m}\over |V_{2m}|}={1\over k+1}.$$
\end{rk}

\section*{Appendix: Existence of the free energy}

As we have seen in the previous sections free energy exists for each known compatible boundary condition.
We note that $a(x)$ is bounded: $\beta^{-1}\ln 2\leq a(x)\leq C_\beta$. Hence limit (\ref{fe2}) is also bounded.
But the problem of convergence of (\ref{fe2}) is still open.
Here we shall give some conditions on $h$,
 under which the corresponding free energy exists.

Let $\pi=\{x^0=x_0<x_1<\dots\}$ be an infinite path. A function $h_x$ on the path $\pi$
is called monotone non increasing (non decreasing) if $h_{x_i}\geq h_{x_{i+1}}$, ($h_{x_i}\leq h_{x_{i+1}}$), $i=0,1,2,\dots$.

\begin{pro}\label{pc} Let $h=\{h_x,\, x\in V\}$ be a compatible b.c.
If on any infinite path starting at $x\in W_{n_0}$, $|h_x|$ is monotone non increasing (non decreasing), then the corresponding free energy $F(\beta, h)$ exists.
\end{pro}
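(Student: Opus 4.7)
\emph{Proof plan.} The plan is to reduce convergence of the Ces\'aro average over the ball $V_n$ to that of the sphere average $\bar a_n:=\frac{1}{|W_n|}\sum_{x\in W_n} a(x)$ via Stolz--Ces\'aro, and then to show that $(\bar a_n)$ is eventually monotone --- hence convergent --- by converting the hypothesis ``$|h_x|$ monotone along paths'' into a child--parent inequality for $a$.

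A quick use of the product-to-sum identity
$$4\cosh(h-\b J)\cosh(h+\b J)=2\cosh(2h)+2\cosh(2\b J)$$
shows that $a(x)$ is an even, strictly increasing function of $|h_x|$. Consequently the monotonicity hypothesis transfers directly: along every infinite path starting in $W_{n_0}$, the sequence of values $a(x)$ is monotone in the same sense as $|h_x|$.

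For the key step, fix $n\geq\max(n_0+1,2)$ and $y\in W_{n-1}$, and let $y^{\ast}\in W_{n_0}$ be the unique ancestor of $y$ (with $y^\ast=y$ if $n-1=n_0$). For each successor $x\in S(y)$, the infinite path $y^{\ast},\dots,y,x,\dots$ starts in $W_{n_0}$, so in the non-increasing case $|h_x|\leq|h_y|$, whence $a(x)\leq a(y)$. Summing over the $k$ successors of $y$, then over $y\in W_{n-1}$, and dividing by $|W_n|=k|W_{n-1}|$ yields $\bar a_n\leq \bar a_{n-1}$; the non-decreasing case is symmetric, giving $\bar a_n\geq \bar a_{n-1}$. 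Combined with the bound $\b^{-1}\ln 2\leq a(x)\leq C_\b$ recalled in the preamble, $(\bar a_n)$ is bounded and eventually monotone, so it converges to some $L$.

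Finally, I would apply Stolz--Ces\'aro to $A_n=\sum_{x\in V_n}a(x)$ and $B_n=|V_n|$ (strictly increasing, unbounded, with $A_n-A_{n-1}=\sum_{x\in W_n}a(x)$ and $B_n-B_{n-1}=|W_n|$) to obtain $\lim_n A_n/B_n=\lim_n \bar a_n=L$, whence by \eqref{fe2} the free energy $F(\b,h)=-L$ exists. The only substantive ingredient is the parent--child comparison of the previous paragraph: that is where the path-monotonicity hypothesis is cashed in, and where the monotonicity of the sphere averages is born. Everything else is bookkeeping.
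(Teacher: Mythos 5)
Your proposal is correct and follows essentially the same route as the paper's proof: Stolz--Ces\'aro reduces the ball average to the sphere average $\bar a_n$, and the hypothesis on $|h_x|$, together with the fact that $a$ is an even (increasing) function of $|h_x|$, makes $\bar a_n$ eventually monotone and bounded, hence convergent. Your version is slightly tidier in that it makes the monotonicity of $a$ in $|h_x|$ explicit via $4\cosh(h-\b J)\cosh(h+\b J)=2\cosh(2h)+2\cosh(2\b J)$ and establishes convergence of $\bar a_n$ before invoking Stolz--Ces\'aro, but the substance is identical.
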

\begin{proof} Using Stolz-Ces\'aro theorem we get
$$\lim_{n\to\infty}{1\over |V_n|}
\sum_{x\in V_{n}}a(x)=\lim_{n\to\infty}A_n, \ \ \mbox{with} \ \ A_n={1\over |W_n|}
\sum_{x\in W_{n}}a(x).$$
We shall show that $A_n$ is monotone for $n>n_0$.
We have
$$A_{n-1}-A_n={1\over |W_n|}\left(k\sum_{x\in W_{n-1}}a(x)-\sum_{x\in W_{n}}a(x)\right)=$$
$${1 \over |W_n|}\sum_{x\in W_{n-1}}\left(ka(x)-\sum_{y\in S(x)}a(y)\right)=
{1 \over |W_n|}\sum_{x\in W_{n-1}}\sum_{y\in S(x)}\left(a(x)-a(y)\right).$$
By monotonicity of $|h_x|$ and by evenness of $a(x)$ we notice that $a(x)-a(y)$ does not change sign for all $x,y$ with $x<y$. Thus $A_n$ is monotone and since it is a bounded sequence it has a limit.
\end{proof}

\section*{ Acknowledgements}

U. Rozikov thanks CNRS for support and  The Centre de Physique Th\'eorique De Marseille, France for kind hospitality during his visit (September-December 2012).

\end{document}